\pgfplotsset{compat=1.15}
\newcommand{\VCG}{\texttt{VCG}}
\newcommand{\WT}{\texttt{WT}}
\newcommand{\eff}{\texttt{eff}}
\newcommand{\bTheta}{\boldsymbol{\Theta}}
\title{{\bf Revenue-Optimal Efficient Mechanism Design \\with General Type Spaces}}
\author{Siddharth Prasad\and Maria-Florina Balcan\and Tuomas Sandholm}
\date{}
\begin{document}

\maketitle

\begin{abstract}
  We derive the revenue-optimal efficient (welfare-maximizing) mechanism in a general multidimensional mechanism design setting when type spaces---that is, the underlying domains from which agents' values come from---can capture arbitrarily complex informational constraints about the agents. Type spaces can encode information about agents representing, for example, machine learning predictions of agent behavior, institutional knowledge about feasible market outcomes (such as item substitutability or complementarity in auctions), and correlations between multiple agents. Prior work has only dealt with {\em connected} type spaces, which are not expressive enough to capture many natural kinds of constraints such as disjunctive constraints. We provide two characterizations of the optimal mechanism based on allocations and connected components; both make use of an underlying network flow structure to the mechanism design. Our results significantly generalize and improve the prior state of the art in revenue-optimal efficient mechanism design. They also considerably expand the scope of what forms of agent information can be expressed and used to improve revenue. 
\end{abstract}

\section{Introduction}
Efficient mechanism design is the science of implementing outcomes that maximize economic value among strategic self-interested agents. It is the cornerstone of prominent real-world market design applications including combinatorial auctions for spectrum~\citep{cramton2013spectrum,leyton2017economics} and sourcing~\citep{Hohner03:Combinatorial,Sandholm06:Changing,Sandholm13:Very}, and Internet display advertisement auctions~\citep{Edelman07:Internet,Varian07:Position}. Additional modern applications include proposed redesigns of financial exchanges~\citep{budish2023flow}, better incentive-aware recommender ecosystems~\citep{prasad2023content,boutilier2024recommender}, and auctions for large language models~\citep{duetting2024mechanism,hajiaghayi2024ad}.

Our focus in the present paper is the design of {\em revenue-maximizing pricing rules} for efficient mechanism design. Our approach is to leverage information about the agents available to the mechanism designer through the agents' {\em type spaces}, which encode constraints on agents' private values (or {\em types}) that are known to the mechanism designer to hold before the private types are elicited. When type spaces are {\em connected},~\citet{krishna1998efficient} and~\citet{balcan2023bicriteria} show that the {\em weakest-type (WT) mechanism} is revenue optimal subject to efficiency, incentive compatibility (agents are best off reporting their true values to the mechanism designer), and individual rationality (no agent is charged more than their reported value for the chosen outcome). The WT mechanism is a type-space-dependent enhancement of the classical~{\em\citet{Vickrey61:Counterspeculation}-\citet{Clarke71:Multipart}-\citet{Groves73:Incentives} (VCG)} mechanism. Informally, while VCG charges an agent her negative externality on the other agents---as measured relative to her exiting the mechanism---WT charges an agent her negative externality relative to the weakest type consistent with her type space. WT always extracts at least as much revenue as VCG (\citet{balcan2023bicriteria} quantify this improvement based on how informative the type spaces are), so it can help overcome the well-documented issues of low revenue that VCG suffers from~\citep{Ausubel06:Lovely,ausubel2023vcg}.

However, connected type spaces are unable to express many natural constraints about agent types. For example, an auction designer might know that a bidder will bid for development rights in exactly one of two geographic regions, but not know which one. This kind of exclusivity constraint/disjunction can only be represented by a disconnected type space. Similarly, the auction designer might know that a bidder, if she submits a bid for a particular item, will bid at least $\$5$ million for that item. The type space here is also inherently disconnected since it allows for either no ($\$0$) bid or a bid exceeding $\$5$ million, but nothing in between. Another cause of disconnectedness is discrete type expression. For example, the FCC has experimented with ``click-box'' bidding to prevent collusive bidding via bid signaling. Here, bids are placed by clicking on the desired spectrum licenses; the bid value is given by fixed increments which precludes the ability to bid any dollar amount~\citep{Cramton00:Collusive,Bajari09:Auction}.

In this work we show that the prior state of the art---the WT mechanism---generates suboptimal revenue when type spaces are disconnected, and
derive the revenue-optimal efficient mechanism for general disconnected type spaces.

\subsection{Our Contributions}

We derive the revenue-optimal efficient mechanism for general agent type spaces. Prior work on efficient multidimensional mechanism design~\citep{krishna1998efficient,balcan2023bicriteria,balcan2025increasing} and efficient trade~\citep{Myerson83:Efficient,cramton1987dissolving} has only considered connected type spaces. In Section~\ref{sec:problem-formulation} we set up the general backdrop of multidimensional mechanism design, review the current state of knowledge on revenue-optimal efficient mechanism design for connected type spaces, and present examples of natural constraints on agent types that are disconnected and therefore outside the scope of prior work. In Section~\ref{sec:wt-suboptimality} we present a simple example showing that the vanilla WT mechanism is suboptimal for disconnected type spaces. In Section~\ref{sec:allocational} we derive the optimal efficient mechanism in terms of {\em allocation-wise Groves mechanisms}, a generalization of the classic~\citet{Groves73:Incentives} mechanism with a pricing scheme that depends more intimately on the efficient allocation. In Section~\ref{sec:component-wise} we provide an alternate characterization of the optimal efficient mechanism based on the decomposition of the type space into connected components, and {\em component-wise Groves mechanisms}. In Section~\ref{sec:example} we illustrate our approach with a simple example.

Key to both our characterizations (in terms of allocations and connected components) is an underlying network flow structure to the optimal efficient mechanism that we establish. Either one of these characterizations could be more useful than the other depending on how the mechanism designer's knowledge about the agents' possible types is represented or learned.

\subsection{Related Work}

\paragraph{Mechanism Design with Complex Type Spaces} There is a small body of work on mechanism design with type spaces that do not conform to the usual assumptions---typically convexity, such as in the seminal work of~\citet{Myerson81:Optimal}---made in economics. \citet{skreta2006mechanism} derives the optimal mechanism for a single-parameter setting where agents' type spaces can be arbitrary measurable subsets of the real line.~\citet{monteiro2009abstract} studies incentive compatibility for general multidimensional type spaces.~\citet{Lovejoy06:Optimal} analyzes various characterizations of optimal mechanisms when agents have finite type spaces, and~\citet{mu2008mechanism} provide characterizations of incentive compatibility in this setting. The mechanisms in these aforementioned works are {\em not} efficient. 

\paragraph{Weakest Types} \citet{Myerson83:Efficient} were the first to introduce the notion of a weakest or ``worst-off'' type in their seminal work on the impossibility of efficient budget-balanced trade; this idea was extended by~\citet{cramton1987dissolving}. The WT mechanism that we generalize and improve upon in the present paper was introduced by~\citet{krishna1998efficient} and generalized by~\citet{balcan2023bicriteria}. \citet{guo2013undominated} study a similar notion in the context of redistribution mechanisms.

\paragraph{Revenue-Maximizing Mechanism Design Without an Efficiency Constraint} \citet{Myerson81:Optimal} first characterized the revenue optimal auction for the sale of a single indivisible item to multiple buyers. Since then, progress on the general multi-item multi-bidder problem has been limited to very special cases ({\em e.g.,}~\citet{manelli2006bundling,Cai12:Optimal}). A popular modern approach to revenue-maximizing mechanism design is via machine learning. Sandholm and Likhodedov~\citep{Likhodedov04:Methods,sandholm2015automated} were the first to automatically optimize mechanism parameters from samples using gradient methods.~\citet{balcan2005mechanism,balcan2007mechanism} made the first connection between machine learning theory and mechanism design to theoretically assess the sample complexity of optimizing mechanisms from data. Data-driven mechanism design has since grown into a rich area spanning theory~\citep{balcan2025generalization} to modern practical methods like deep learning~\citep{dutting2019optimal,curry2023differentiable,wang2024gemnet} and diffusion models~\citep{wang2025bundleflow}. The design of pricing rules for {\em efficient} mechanism design has largely been left unexplored by the machine learning for mechanism design literature. The new characterization results we present here should serve as a timely launchpad for a new research strand along this vein.

\section{Problem Formulation, Mechanism Design Background, and Examples of Disconnected Type Spaces}\label{sec:problem-formulation}

In multidimensional mechanism design there is an abstract (finite) set $\Gamma$ of outcomes or {\em allocations}. There are $n$ agents, indexed by $i = 1,\ldots, n$, who have private values for each outcome. Agent $i$'s valuation function, or {\em type}, $v_i : \Gamma\to\R$, encodes the maximum value $v_i(\alpha)$ she is willing to pay for allocation $\alpha$ to be realized. We treat Agent $i$'s type as an element of $\R^{\Gamma}$. A full type profile across the agents is denoted by $\vec{v} = (v_1,\ldots, v_n)$, with $\vec{v}_{-i}$ denoting the joint type profile excluding $v_i$.

\paragraph{Type Spaces}
The {\em joint type space} of the agents is denoted by $\bTheta\subseteq\bigtimes_{i=1}^n\R^{\Gamma}$. While the mechanism designer does know the agents' private type profile $\vec{v}$ before types are elicited, he knows that $\vec{v}\in\bTheta$. Given revealed types $\vec{v}_{-i}$ of agents excluding $i$, Agent $i$'s induced type space is $\Theta_i(\vec{v}_{-i}) = \{v_i : (v_i,\vec{v}_{-i})\in\bTheta\}$. Let $\bTheta_{-i} = \{\vec{v}_{-i} : \exists v_i\in\R^{\Gamma}\text{ s.t. }(v_i,\vec{v}_{-i})\in\bTheta\}$. The mechanism designer's role is to elicit types $v_1,\ldots, v_n$ from the agents, choose an allocation $\alpha\in\Gamma$ based on the elicited types, and charge each agent a payment for the chosen allocation.

\paragraph{Efficiency, Incentive Compatibility, and Individual Rationality} The {\em efficient} allocation is $\alpha^{\eff}(\vec{v}) = \argmax_{\alpha\in\Gamma}\sum_{j=1}^n v_j(\alpha)$; a mechanism is {\em efficient} if it always implements $\alpha^{\eff}(\vec{v})$. We assume a fixed tie-breaking rule so that the argmin is unique. A pricing rule $\vec{p} = (p_1,\ldots, p_n)$ specifies a pricing function $p_i : \bTheta\to\R$ for each agent; $\vec{p}$ is {\em incentive compatible (IC)} if $v_i(\alpha^{\eff}(v_i,\vec{v}_{-i})) - p_i(v_i,\vec{v}_{-i})\ge v_i(\alpha^{\eff}(v_i',\vec{v}_{-i})) - p_i(v_i',\vec{v}_{-i})$ for all $(v_i,\vec{v}_{-i}), (v_i', \vec{v}_{-i})\in\bTheta$ and $\vec{p}$ is {\em individually rational (IR)} if $v_i(\alpha^{\eff}(v_i,\vec{v}_{-i})) - p_i(v_i,\vec{v}_{-i})\ge 0$ for all $(v_i,\vec{v}_{-i})\in\bTheta$. The revenue of an efficient mechanism with payment rule $\vec{p}$ on revealed type profile $\vec{v}$ is $\sum_{i=1}^n p_i(\vec{v})$.

\paragraph{Groves Mechanisms and Weakest Types}

A {\em Groves mechanism} is defined by functions $h_1,\ldots, h_n$ where $h_i : \bigtimes_{j\neq i}\R^{\Gamma}\to\R$ does not depend on Agent $i$'s revealed type $v_i$. It implements the efficient allocation $\alpha^{\eff}(\vec{v})$ via payments $p_i(\vec{v}) = h_i(\vec{v}_{-i}) - \sum_{j\neq i}v_j(\alpha^{\eff}(\vec{v}))$. All Groves mechanisms are efficient and IC. Two important specific Groves mechanisms are the classical{\em ~\citet{Vickrey61:Counterspeculation}-\citet{Clarke71:Multipart}-\citet{Groves73:Incentives} (VCG)} mechanism and the {\em weakest-type (WT)} mechanism~\citep{krishna1998efficient,balcan2023bicriteria} which are defined by the pricing rules $p_i^{\VCG}(\vec{v}) = w(0, \vec{v}_{-i}) - \sum_{j\neq i}v_j(\alpha^{\eff}(\vec{v}))$ and $p_i^{\WT}(\vec{v}) = \inf_{\widetilde{v}_i\in \Theta_i(\vec{v}_{-i})} w(\widetilde{v}_i,\vec{v}_{-i}) - \sum_{j\neq i}v_j(\alpha^{\eff}(\vec{v}))$, respectively. VCG and WT are both efficient, IC, and IR.

The VCG mechanism---despite its theoretical attractiveness---is nearly never used in practice due to its proclivity to generate unacceptably low revenue (or even zero revenue~\citep{ausubel2017market,ausubel2023vcg}). The WT mechanism is an attractive alternative since it preserves the theoretical virtues of VCG and can only yield improved revenue based on how much information about the agents is conveyed by $\bTheta$.

Our objective is to design better pricing rules that yield better revenues. As detailed in the subsequent section, when type spaces are connected---a stringent assumption on the structure of agent types (but nonetheless the predominant assumption in the mechanism design literature)---one cannot beat WT. 

\subsection{Optimal Efficient Mechanism Design with Connected Type Spaces}

We review the current state of knowledge on efficient mechanism design when type spaces are connected. The first two results are about the uniqueness of Groves mechanisms as the only efficient and IC mechanisms. The third is about the revenue optimality of the weakest type mechanism.

\begin{theorem}[Revenue Equivalence~\citep{Green77:Characterization,Holmstroem79:Groves}]\label{thm:uniqueness-of-prices}
    Suppose $\Theta_i(\vec{v}_{-i})$ is connected for every $\vec{v}_{-i}$. Let $\vec{p}$ be an IC pricing rule and let $\vec{p}'$ be any other pricing rule. Then, $\vec{p}'$ is IC if and only if there exist functions $h_i : \bTheta_{-i}\to\R$ such that $p_i'(\vec{v}) = p_i(\vec{v})+h_i(\vec{v}_{-i})$ for all $\vec{v}$. \footnote{This result holds more generally for any (not-necessarily-efficient) allocation function $f:\bTheta\to\Gamma$.}
\end{theorem}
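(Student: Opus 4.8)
The plan is to prove the two directions separately, invoking connectedness only for the forward (``only if'') direction. For the easy direction, suppose $p_i'(\vec{v}) = p_i(\vec{v}) + h_i(\vec{v}_{-i})$ for some $h_i : \bTheta_{-i}\to\R$. Since $h_i(\vec{v}_{-i})$ does not depend on Agent $i$'s report, it appears on both sides of the IC inequality for $\vec{p}'$ and cancels, reducing that inequality verbatim to the IC inequality for $\vec{p}$. Hence $\vec{p}'$ is IC; note this direction uses no hypothesis on $\Theta_i(\vec{v}_{-i})$, and it never uses efficiency, consistent with the footnote's claim for general allocation functions.

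For the ``only if'' direction, fix $\vec{v}_{-i}$, write $\alpha(v_i) := \alpha^{\eff}(v_i,\vec{v}_{-i})$, and suppress the fixed $\vec{v}_{-i}$ in the payments. Define $\Delta(v_i) := p_i'(v_i,\vec{v}_{-i}) - p_i(v_i,\vec{v}_{-i})$ on $\Theta_i(\vec{v}_{-i})$; the goal is to show $\Delta$ is constant, which then defines $h_i(\vec{v}_{-i})$. First I would record the consequence of IC at a pair of reports $s,t\in\Theta_i(\vec{v}_{-i})$: the two IC inequalities for $\vec{p}$ sandwich the payment difference as
\[ s(\alpha(t)) - s(\alpha(s)) \;\le\; p_i(t) - p_i(s) \;\le\; t(\alpha(t)) - t(\alpha(s)), \]
and the identical sandwich holds for $\vec{p}'$. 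A first corollary is that if $\alpha(s)=\alpha(t)$ the sandwich collapses to $p_i(s)=p_i(t)$ and likewise $p_i'(s)=p_i'(t)$, so $\Delta(s)=\Delta(t)$; since $\Gamma$ is finite, $\Delta$ therefore takes at most $|\Gamma|$ distinct values.

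The second corollary is continuity: $p_i(t)-p_i(s)$ and $p_i'(t)-p_i'(s)$ lie in a common interval of width $L(s,t) := (t-s)(\alpha(t)) - (t-s)(\alpha(s))$, so $|\Delta(t)-\Delta(s)|\le L(s,t)$, and because $\Gamma$ is finite we have $L(s,t)\to 0$ as $t\to s$ in $\R^{\Gamma}$. Thus $\Delta$ is continuous on $\Theta_i(\vec{v}_{-i})$. To finish I would invoke the hypothesis: the continuous image $\Delta(\Theta_i(\vec{v}_{-i}))$ is a connected subset of $\R$, hence an interval, but by the first corollary it is also finite, so it is a single point. Therefore $\Delta$ is constant, and setting $h_i(\vec{v}_{-i})$ equal to that constant yields $p_i'(\vec{v}) = p_i(\vec{v}) + h_i(\vec{v}_{-i})$.

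The main obstacle is the passage from ``$\Delta$ is constant on each allocation region $\{v_i : \alpha(v_i)=\alpha\}$'' to ``$\Delta$ is globally constant'': a priori these finitely many regions could be separated, letting $\Delta$ jump between them, and the individual payments $p_i,p_i'$ are \emph{not} assumed continuous (each can jump at an allocation boundary). The key realization is that the two jumps must coincide, because both payment differences are pinned to the same sandwiching interval whose width $L(s,t)$ vanishes as $t\to s$; this is exactly what renders $\Delta$ continuous even though its summands need not be. Connectedness of $\Theta_i(\vec{v}_{-i})$ is then precisely what forbids a continuous, finitely-valued function from being non-constant, which is where the hypothesis is essential and is exactly the regime the remainder of the paper departs from.
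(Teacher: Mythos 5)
Your proof is correct and is essentially the standard Green--Laffont/Holmstr\"om argument that the paper itself omits (deferring to \citet{Holmstroem79:Groves} and \citet{nisan2007introduction}): the two-sided IC ``sandwich'' pins both payment differences to a common interval of width $O(\|t-s\|_\infty)$, making $\Delta = p_i'-p_i$ continuous despite $p_i,p_i'$ individually jumping at allocation boundaries, and the taxation principle makes $\Delta$ finitely valued, so connectedness forces it to be constant. Your packaging of the usual $\varepsilon$-chain step as ``a continuous, finitely-valued function on a connected set is constant'' is a clean and faithful rendering of the same idea.
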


\citet{nisan2007introduction} calls this result ``uniqueness of prices" and provides a self-contained proof.

\begin{theorem}[Uniqueness of Groves Mechanisms~\citep{Green77:Characterization,Holmstroem79:Groves}]\label{thm:groves-uniqueness}
    Suppose $\Theta_i(\vec{v}_{-i})$ is connected for every $\vec{v}_{-i}$ and let $\vec{p}$ be a pricing rule. Then, $\vec{p}$ is IC if and only if there exist functions $h_i:\bTheta_{-i}\to\R$ such that $p_i(\vec{v}) = h_i(\vec{v}_{-i}) - \sum_{j\neq i}v_j(\alpha^{\eff}(\vec{v}))$. In other words, the only efficient and IC mechanisms on connected type spaces are Groves mechanisms.
\end{theorem}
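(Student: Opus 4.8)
The plan is to derive Theorem~\ref{thm:groves-uniqueness} as a corollary of Theorem~\ref{thm:uniqueness-of-prices} (Revenue Equivalence) by exhibiting one concrete efficient and IC pricing rule and then invoking the characterization. Concretely, I would proceed as follows.

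First, fix the baseline pricing rule to be VCG, i.e.\ take $p_i(\vec{v}) = p_i^{\VCG}(\vec{v}) = w(0,\vec{v}_{-i}) - \sum_{j\neq i} v_j(\alpha^{\eff}(\vec{v}))$. As already noted in the excerpt, VCG is a Groves mechanism and every Groves mechanism is efficient and IC; thus VCG is an IC pricing rule to which Theorem~\ref{thm:uniqueness-of-prices} applies. (Any single Groves mechanism would serve equally well as the anchor; VCG is simply the most convenient.)

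Second, for the ``only if'' direction, suppose $\vec{p}$ is an arbitrary IC pricing rule. Applying Theorem~\ref{thm:uniqueness-of-prices} with $\vec{p}'$ set to $\vec{p}$ and the baseline set to VCG, connectedness of each $\Theta_i(\vec{v}_{-i})$ gives functions $g_i : \bTheta_{-i}\to\R$ with $p_i(\vec{v}) = p_i^{\VCG}(\vec{v}) + g_i(\vec{v}_{-i})$ for all $\vec{v}$. Substituting the VCG formula yields $p_i(\vec{v}) = \bigl(w(0,\vec{v}_{-i}) + g_i(\vec{v}_{-i})\bigr) - \sum_{j\neq i} v_j(\alpha^{\eff}(\vec{v}))$, so defining $h_i(\vec{v}_{-i}) := w(0,\vec{v}_{-i}) + g_i(\vec{v}_{-i})$ puts $\vec{p}$ in exactly the claimed Groves form. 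The key point is that $h_i$ depends only on $\vec{v}_{-i}$, which is immediate since both $w(0,\vec{v}_{-i})$ and $g_i(\vec{v}_{-i})$ do.

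Third, for the ``if'' direction, suppose $p_i(\vec{v}) = h_i(\vec{v}_{-i}) - \sum_{j\neq i} v_j(\alpha^{\eff}(\vec{v}))$ for some functions $h_i$. This is by definition a Groves mechanism, and the excerpt already states that all Groves mechanisms are efficient and IC, giving the claim directly (no connectedness needed for this direction). I expect the main conceptual obstacle to lie not in the algebra but in correctly invoking Theorem~\ref{thm:uniqueness-of-prices}: one must verify that its hypothesis (existence of \emph{some} IC rule to anchor the equivalence) is met, which is exactly what the existence of VCG provides, and one must be careful that the additive degree of freedom $g_i$ captured by revenue equivalence is precisely the degree of freedom $h_i$ appearing in the Groves family. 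Once these are lined up, the theorem is a clean translation of revenue equivalence into the Groves parametrization.
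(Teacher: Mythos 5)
Your proposal is correct and takes essentially the same route the paper defers to: anchor Theorem~\ref{thm:uniqueness-of-prices} at the \VCG{} mechanism (one known IC rule for the efficient allocation), absorb the additive offset $g_i(\vec{v}_{-i})$ into $h_i(\vec{v}_{-i})$ to obtain the Groves form, and note the converse is immediate since every Groves mechanism is IC. The paper omits its proof precisely because it is this standard derivation from \citet{Holmstroem79:Groves} as presented in \citet{nisan2007introduction}, with the only (cosmetic) generalization being that connectedness is required of each induced type space $\Theta_i(\vec{v}_{-i})$ rather than of a fixed $\Theta_i$.
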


{\em Remark.} Theorems~\ref{thm:uniqueness-of-prices} and~\ref{thm:groves-uniqueness} are actually generalizations of the versions derived by~\citet{Holmstroem79:Groves} and presented in~\citet{nisan2007introduction} (the proofs are identical so we omit them). Those versions do not allow Agent $i$'s type space to vary based on the revealed types $\vec{v}_{-i}$ of the other agents; there is just a fixed type space $\Theta_i$ for each agent, and $\Theta_i$ needs to be connected. In contrast, we require that $\Theta_i(\vec{v}_{-i})$ is connected for each $\vec{v}_{-i}$. We present here a concrete example for which Theorems~\ref{thm:uniqueness-of-prices} and~\ref{thm:groves-uniqueness} apply but the original versions from~\citet{Holmstroem79:Groves} do not. Consider an auction of a single item where the auctioneer does not know the quality of the item but knows that all bids will be clustered around either a high value or a low value---say $\bTheta = \{\vec{v}\in\R^n : v_i\in [H-\varepsilon, H+\varepsilon]\;\forall\;i\}\cup\{\vec{v}\in\R^n : v_i\in [L-\varepsilon, L+\varepsilon]\;\forall\;i\}$. While $\bTheta$ is disconnected, $\Theta_i(\vec{v}_{-i})$ is connected for every $\vec{v}_{-i}\in\bTheta_{-i}$ since revealed types $\vec{v}_{-i}$ determine whether $v_i$ is a low bid or a high bid.

\begin{theorem}[Optimality of Weakest Type~\citep{krishna1998efficient,balcan2023bicriteria}]
    Suppose $\Theta_i(\vec{v}_{-i})$ is connected for every $\vec{v}_{-i}$. Let $\vec{p}$ be any IC and IR pricing rule. Then $p_i(\vec{v})\le p_i^{\WT}(\vec{v})$ for all $i$ and all $\vec{v}\in\bTheta$.
\end{theorem}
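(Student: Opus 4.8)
The plan is to combine the Groves-uniqueness characterization (Theorem~\ref{thm:groves-uniqueness}) with the IR constraint; the connectedness hypothesis does essentially all the work. First, observe that the IC and IR conditions in the problem formulation are both stated relative to $\alpha^{\eff}$, so any IC and IR pricing rule $\vec{p}$ is in particular an efficient, IC rule. Since each induced type space $\Theta_i(\vec{v}_{-i})$ is connected, Theorem~\ref{thm:groves-uniqueness} then guarantees that $\vec{p}$ is a Groves mechanism: there exist functions $h_i:\bTheta_{-i}\to\R$ with $p_i(\vec{v}) = h_i(\vec{v}_{-i}) - \sum_{j\neq i} v_j(\alpha^{\eff}(\vec{v}))$. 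Writing $w(\vec{v}) = \sum_j v_j(\alpha^{\eff}(\vec{v}))$ for the maximized welfare and comparing with the definition of $p_i^{\WT}$, the externality term $\sum_{j\neq i} v_j(\alpha^{\eff}(\vec{v}))$ appears identically in both expressions. Hence the target inequality $p_i(\vec{v})\le p_i^{\WT}(\vec{v})$ reduces to the single scalar bound $h_i(\vec{v}_{-i})\le \inf_{\widetilde{v}_i\in\Theta_i(\vec{v}_{-i})} w(\widetilde{v}_i,\vec{v}_{-i})$.

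Next I would extract exactly this bound from individual rationality. Fix $i$ and $\vec{v}_{-i}$, and let $\widetilde{v}_i$ be an arbitrary type in $\Theta_i(\vec{v}_{-i})$. IR applied to the profile $(\widetilde{v}_i,\vec{v}_{-i})$ reads $\widetilde{v}_i(\alpha^{\eff}(\widetilde{v}_i,\vec{v}_{-i})) - p_i(\widetilde{v}_i,\vec{v}_{-i})\ge 0$. Substituting the Groves form of $p_i$ and combining the $\widetilde{v}_i$ term with the recovered $\sum_{j\neq i}$ externality term reconstitutes the full welfare $\sum_j v_j(\alpha^{\eff}(\widetilde{v}_i,\vec{v}_{-i})) = w(\widetilde{v}_i,\vec{v}_{-i})$, so IR becomes $w(\widetilde{v}_i,\vec{v}_{-i}) - h_i(\vec{v}_{-i})\ge 0$, i.e. $h_i(\vec{v}_{-i})\le w(\widetilde{v}_i,\vec{v}_{-i})$. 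Since $\widetilde{v}_i\in\Theta_i(\vec{v}_{-i})$ was arbitrary, taking the infimum over $\widetilde{v}_i$ yields $h_i(\vec{v}_{-i})\le \inf_{\widetilde{v}_i\in\Theta_i(\vec{v}_{-i})} w(\widetilde{v}_i,\vec{v}_{-i})$, which is precisely the reduced inequality from the previous paragraph. This completes the argument.

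I do not expect a genuine obstacle: the proof is short because connectedness, via Theorem~\ref{thm:groves-uniqueness}, forces every efficient IC rule into Groves form, after which IR pins the free function $h_i$ below the welfare at every feasible type of agent $i$, and WT is exactly the Groves mechanism whose $h_i$ attains the pointwise-largest admissible value (the infimum of welfare). The only point requiring care is the implicit appeal to efficiency noted above, so that Theorem~\ref{thm:groves-uniqueness} legitimately applies. The real subtlety---and the motivation for the rest of the paper---is that this clean chain breaks once $\Theta_i(\vec{v}_{-i})$ is disconnected: then $\vec{p}$ need not be a Groves mechanism, and the infimum-of-welfare bound no longer characterizes the revenue-optimal efficient mechanism.
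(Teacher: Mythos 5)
Your proof is correct and is the standard argument: the paper itself states this theorem without proof (citing \citet{krishna1998efficient} and \citet{balcan2023bicriteria}), but your route --- Groves uniqueness under connectedness to pin down the form $p_i(\vec{v}) = h_i(\vec{v}_{-i}) - \sum_{j\neq i}v_j(\alpha^{\eff}(\vec{v}))$, then IR at every $\widetilde{v}_i\in\Theta_i(\vec{v}_{-i})$ to force $h_i(\vec{v}_{-i})\le\inf_{\widetilde{v}_i}w(\widetilde{v}_i,\vec{v}_{-i})$ --- is exactly the specialization to connected type spaces of the IR direction the paper proves later in Theorem~\ref{thm:allocation-IC-feasibility}. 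Your closing remark about why the chain breaks for disconnected $\Theta_i(\vec{v}_{-i})$ also correctly identifies the paper's motivation.
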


\subsection{Examples of Disconnected Type Spaces}

Here we present three examples of natural constraints on agent types for which $\Theta_i(\vec{v}_{-i})$ is a disconnected set, illustrating the need for a more general theory than the current one.

\begin{itemize}
    \item {\em Exclusivity constraints.} Constraints of the form ``Bidder $i$ will place a bid for development rights exceeding $\$10$ million in either San Francisco or New York City, but not both'' correspond to disconnected type spaces of the form $\Theta_i = \{v_i : v_i(\text{SF})\ge 10\text{ million}, v_i(\text{NYC})=0\}\cup\{v_i : v_i(\text{SF})=0, v_i(\text{NYC})\ge 10\text{ million}\}$.
    \item {\em Conditionals.} Constraints of the form ``if Bidder $i$ bids on the bundle of items $\{A,B\}$, she will bid at least $\$5$ million'' correspond to disconnected type spaces that looks like $\Theta_i = \{v_i : v_i(\{A,B\}) = 0\text{ OR } v_i(\{A,B\})\ge 5\text{ million}\}$.
\end{itemize}

The above two kinds of constraints are natural in multi-item auctions since the auction designer likely does not know the specific items/packages a bidder will bid on, but has more refined knowledge about the value of any (hypothetical) bid.

\begin{itemize}
    \item {\em Discrete types.} Type spaces of the form $\Theta_i = \{v_i : v_i(\alpha)\equiv 0\pmod{1000}, v_i(A)\ge 5000\}$ convey that values for allocation $\alpha$ are expressed in increments of $\$1000$, starting at $\$5000$. 
\end{itemize}

Discretized type expression might be a natural part of an auction interface. For example, the FCC experimented with ``click-box bidding'' to prevent collusion wherein bidders can signal to other bidders via the numerical value of their bids~\citep{Cramton00:Collusive,Bajari09:Auction}.

\section{Example Illustrating Sub-optimality of Vanilla Weakest Type}\label{sec:wt-suboptimality}

Consider the following example of a two-item auction where a bidder has a disconnected type space: there are two items $A, B$ for sale, three bidders submit XOR bids $v_1(A) = 5$, $v_2(B) = 3$, and $v_3(A) = 1$ (under the XOR bidding language~\citep{Sandholm02:Algorithm} each bidder can only win a package they explicitly bid for, which effectively means bidders value packages they did not bid for at zero). The type space for Bidder 1 is $$\Theta_1 = \left\{(v_1(A), 0, 0) : v_1(A)\ge 4\right\}\cup\left\{(0, v_1(B), 0):v_1(B) \ge 4\right\}$$ which says that Bidder $1$ wants either $A$ or $B$, but not both, and will place a bid of at least $\$4$ on her desired item (the third coordinate represents $v_1(AB)$, which is zero). This is an example of an exclusivity constraint discussed previously. $\Theta_1\subseteq\R^2$ (ignoring the third coordinate which is always zero) is a disjoint union of two rays with one on the $v_1(A)$ axis and one on the $v_1(B)$ axis. All other bidders' type spaces are unrestricted. 
Bidders $1$ and $2$ win items $A$ and $B$, respectively, in the efficient allocation. VCG charges Bidder $1$ $p^{\VCG}_1 = 4 - 3 = 1$. WT charges Bidder $1$ $p^{\WT}_1 = \min\{7, 5\} - 3 = 2$. A better IC and IR payment scheme for Bidder $1$ that is still efficient is: ``if Bidder $1$ wins either item she bid on, she pays $\$4$''. That payment scheme extracts a payment of $\$4$ from Bidder 1, showing that WT is suboptimal here. 

\section{Characterization of the Optimal Efficient Mechanism}

We now derive the optimal efficient mechanism. We provide two equivalent characterizations. The first (Section~\ref{sec:allocational}) is via a decomposition of the type space based on allocations. The second (Section~\ref{sec:component-wise}) is based on the decomposition of the type space into connected components. Key to both our characterizations is an underlying network flow structure. Either one of these characterizations could be more useful than the other depending on how the mechanism designer's knowledge about the agents' possible types is represented or learned. Section~\ref{sec:example} contains an illustrative example.

\subsection{Allocational Characterization of the Optimal Efficient Mechanism}\label{sec:allocational}

Let $\Theta_i^{\alpha}(\vec{v}_{-i}) = \left\{v_i \in \Theta_i(\vec{v}_{-i}) : \alpha^{\texttt{eff}}(v_i,\vec{v}_{-i}) = \alpha\right\}$ be the set of types $v_i$ for Agent $i$ leading to efficient allocation $\alpha$. These sets form a partition of Agent $i$'s type space: $\Theta_i(\vec{v}_{-i}) = \bigcup_{\alpha\in\Gamma}\Theta_i^{\alpha}(\vec{v}_{-i}).$

\paragraph{Allocation-wise Groves Mechanisms} We define a large class of pricing rules, not all of which are IC, that contains all IC pricing rules. These generalize the vanilla Groves mechanisms. An {\em allocation-wise Groves mechanism} is defined by functions $h_i^{\alpha} : \bTheta_{-i}\to\R$ for every allocation $\alpha\in\Gamma$ and every agent $i$. It charges Agent $i$ $$p_i(\vec{v}) = h_i^{\alpha^{\eff}(\vec{v})}(\vec{v}_{-i}) - \sum_{j\neq i}v_j(\alpha^{\eff}(\vec{v})).$$ So, while the term $h_i$ in a vanilla Groves mechanism cannot have any dependence on Agent $i$'s revealed type, the corresponding term in an allocation-wise Groves mechanism can depend on the efficient allocation induced by Agent $i$'s revealed type. Not all allocation-wise Groves mechanisms are IC, but, as the following lemma shows, it is a rich enough class to cover all IC mechanisms.

\begin{lemma}\label{lem:allocation}
    If $\vec{p}$ is IC, it is an allocation-wise Groves mechanism.
\end{lemma}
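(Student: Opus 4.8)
The plan is to show that an arbitrary IC pricing rule $\vec{p}$ can be written in allocation-wise Groves form, i.e.\ that for each agent $i$ and each allocation $\alpha$ there is a function $h_i^\alpha:\bTheta_{-i}\to\R$ independent of $v_i$ such that $p_i(\vec{v}) = h_i^{\alpha^{\eff}(\vec{v})}(\vec{v}_{-i}) - \sum_{j\neq i}v_j(\alpha^{\eff}(\vec{v}))$ whenever $\vec{v}\in\bTheta$. The natural definitional move is to set, for each agent $i$, each $\vec{v}_{-i}\in\bTheta_{-i}$, and each allocation $\alpha$ that is actually induced by some type in $\Theta_i^\alpha(\vec{v}_{-i})$,
\[
    h_i^\alpha(\vec{v}_{-i}) = p_i(v_i,\vec{v}_{-i}) + \sum_{j\neq i}v_j(\alpha)
\]
for an \emph{arbitrary} witness $v_i\in\Theta_i^\alpha(\vec{v}_{-i})$ (and define $h_i^\alpha$ arbitrarily, say as $0$, on the empty allocation classes). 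Since $\alpha = \alpha^{\eff}(v_i,\vec{v}_{-i})$ for every such witness, the displayed formula then reproduces $p_i(\vec{v})$ by construction, so the entire content of the lemma reduces to checking that this definition is \emph{well-posed}: the right-hand side must not depend on which witness $v_i$ we picked.

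Thus the heart of the argument is the following claim. Fix $i$, $\vec{v}_{-i}$, and $\alpha$, and take any two types $v_i, v_i'\in\Theta_i^\alpha(\vec{v}_{-i})$, so that both induce the same efficient allocation $\alpha$. I would show
\[
    p_i(v_i,\vec{v}_{-i}) + \sum_{j\neq i}v_j(\alpha) = p_i(v_i',\vec{v}_{-i}) + \sum_{j\neq i}v_j(\alpha),
\]
which after cancelling the common $\sum_{j\neq i}v_j(\alpha)$ term is just $p_i(v_i,\vec{v}_{-i}) = p_i(v_i',\vec{v}_{-i})$. This is exactly what IC delivers: writing out the two IC inequalities between the reports $v_i$ and $v_i'$ and using that $\alpha^{\eff}(v_i,\vec{v}_{-i}) = \alpha^{\eff}(v_i',\vec{v}_{-i}) = \alpha$, the allocation terms $v_i(\alpha)$ are identical on both sides of each inequality and cancel, leaving $-p_i(v_i,\vec{v}_{-i}) \ge -p_i(v_i',\vec{v}_{-i})$ from one direction and the reverse from the other, forcing equality of the two payments.

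The key simplification that makes this clean is that I only need to compare payments \emph{within} a single allocation class $\Theta_i^\alpha(\vec{v}_{-i})$, where the induced allocation is constant; the delicate feature of disconnected type spaces---that IC places essentially no constraint \emph{across} allocation classes, which is precisely why ordinary Groves uniqueness (Theorem~\ref{thm:groves-uniqueness}) fails and WT becomes suboptimal---never enters, because the allocation-wise formulation assigns an independent offset $h_i^\alpha$ to each class. I expect the main (indeed the only) real obstacle to be the bookkeeping of edge cases: handling allocation classes $\Theta_i^\alpha(\vec{v}_{-i})$ that are empty (no witness exists, so $h_i^\alpha(\vec{v}_{-i})$ may be set freely since that $\alpha$ never arises as $\alpha^{\eff}(\vec{v})$ on $\bTheta$), and confirming that the fixed tie-breaking rule guarantees $\alpha^{\eff}$ is genuinely well-defined so that the partition $\Theta_i(\vec{v}_{-i}) = \bigcup_\alpha \Theta_i^\alpha(\vec{v}_{-i})$ is into disjoint classes. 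Beyond that, the argument is a short two-line application of the IC inequalities in both directions.
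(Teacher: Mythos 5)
Your proof is correct, and it is somewhat more direct than the paper's. The paper first decomposes each allocation class $\Theta_i^{\alpha}(\vec{v}_{-i})$ into connected components, invokes Theorem~\ref{thm:groves-uniqueness} to get a vanilla Groves form on each component, and then glues the components together by citing the standard fact that IC forces identical payments for any two reports inducing the same allocation. You observe that this last fact is the only thing actually needed: since the efficient allocation is constant on all of $\Theta_i^{\alpha}(\vec{v}_{-i})$, the two IC inequalities between any $v_i, v_i'$ in that class have identical value terms $v_i(\alpha)$ on both sides and immediately force $p_i(v_i,\vec{v}_{-i}) = p_i(v_i',\vec{v}_{-i})$, so $h_i^{\alpha}(\vec{v}_{-i}) := p_i(v_i,\vec{v}_{-i}) + \sum_{j\neq i}v_j(\alpha)$ is well defined and reproduces $\vec{p}$. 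This bypasses both the connected-component decomposition and the appeal to Groves uniqueness, and in particular it does not require any topological hypothesis on the type space, which is in keeping with the spirit of the paper (the whole point being that the classes $\Theta_i^{\alpha}(\vec{v}_{-i})$ may be disconnected). Your handling of empty classes (define $h_i^{\alpha}$ arbitrarily there) is a detail the paper leaves implicit. The two arguments rest on the same underlying fact, so the difference is one of economy rather than substance, but your version is self-contained where the paper's delegates the key step to a cited proposition.
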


\begin{proof}
    For each $\alpha$, partition $\Theta_i^{\alpha}(\vec{v}_{-i})$ as a disjoint union of its connected components: $\Theta_i^{\alpha}(\vec{v}_{-i}) = \bigcup_{C\in\cC^{\alpha}}C$ ($\cC$ need not be finite). When restricted to any connected component $C\in\cC^{\alpha}$, $\vec{p}$ is a vanilla Groves mechanism (due to Theorem~\ref{thm:groves-uniqueness}). That is, there exists $h_i^C$ such that for all $v_i\in C$, $p_i(v_i,\vec{v}_{-i}) = h_i^C(\vec{v}_{-i}) - \sum_{j\neq i}v_j(\alpha)$. It is a standard fact that a pricing rule is IC if and only if it prescribes identical payments for any two types leading to the same allocation. That is, $p_i(v_i,\vec{v}_{-i}) = p_i(v_i', \vec{v}_{-i})$ for any $v_i, v_i'\in\Theta_i^{\alpha}(\vec{v}_{-i})$ ({\em e.g.,} Proposition 1.27 of~\citet{nisan2007introduction}). Hence the functions $h_i^C$ for each $C\in\cC^{\alpha}$ are all identical; let $h_i^{\alpha}$ be this function. Then $\vec{p}$ is the allocation-wise Groves mechanism given by the $h_i^{\alpha}$.
\end{proof}

We now characterize all IC and IR pricing rules that implement the efficient allocation.

\begin{theorem}\label{thm:allocation-IC-feasibility} A pricing rule $\vec{p}$ is IC and IR if and only if it is an allocation-wise Groves mechanism given, for each $i$, by $(h_i^{\alpha})_{\alpha\in\Gamma}$ that satisfies
\begin{equation}\label{eq:allocation-constraints}\tag{Constr.-$\Gamma$}\begin{aligned}
    &h_i^{\alpha}(\vec{v}_{-i})\le \inf_{\widetilde{v}_i\in\Theta_i^{\alpha}(\vec{v}_{-i})}w(\widetilde{v}_i,\vec{v}_{-i})\quad\forall\;\alpha\in\Gamma \\
    &h_i^{\alpha}(\vec{v}_{-i}) - h_i^{\beta}(\vec{v}_{-i})\le \inf_{\widetilde{v}_i\in\Theta_i^{\alpha}(\vec{v}_{-i})}w(\widetilde{v}_i, \vec{v}_{-i}) - \left[\widetilde{v}_i(\beta) + \sum_{j\neq i}v_j(\beta)\right]\quad\forall\;\alpha,\beta\in\Gamma.
\end{aligned}\end{equation}
\end{theorem}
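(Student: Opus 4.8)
The plan is to prove both directions by translating the IR and IC inequalities, evaluated type-by-type, into the two families of constraints, exploiting the fact that an allocation-wise Groves mechanism assigns every type in $\Theta_i^{\alpha}(\vec{v}_{-i})$ the same payment. Throughout I would fix $\vec{v}_{-i}\in\bTheta_{-i}$ and an agent $i$ and record the key identity: if $v_i\in\Theta_i^{\alpha}(\vec{v}_{-i})$ then $\alpha^{\eff}(v_i,\vec{v}_{-i})=\alpha$, so $w(v_i,\vec{v}_{-i}) = v_i(\alpha)+\sum_{j\neq i}v_j(\alpha)$ and Agent $i$'s truthful utility is $v_i(\alpha)-p_i(v_i,\vec{v}_{-i}) = w(v_i,\vec{v}_{-i})-h_i^{\alpha}(\vec{v}_{-i})$.

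For the forward direction I would start from Lemma~\ref{lem:allocation}, which already expresses any IC rule as an allocation-wise Groves mechanism with data $(h_i^{\alpha})$. The IR inequality for a type $v_i\in\Theta_i^{\alpha}(\vec{v}_{-i})$ reads $w(v_i,\vec{v}_{-i})-h_i^{\alpha}(\vec{v}_{-i})\ge 0$; since this holds for every type in the component, taking the infimum over $\Theta_i^{\alpha}(\vec{v}_{-i})$ yields the first line of (Constr.-$\Gamma$). For IC I would take a true type $v_i\in\Theta_i^{\alpha}(\vec{v}_{-i})$ misreporting as $v_i'\in\Theta_i^{\beta}(\vec{v}_{-i})$; because the payment under the report depends on $v_i'$ only through $\beta$, the inequality becomes $w(v_i,\vec{v}_{-i})-h_i^{\alpha}(\vec{v}_{-i})\ge [v_i(\beta)+\sum_{j\neq i}v_j(\beta)]-h_i^{\beta}(\vec{v}_{-i})$, which rearranges to $h_i^{\alpha}-h_i^{\beta}\le w(v_i,\vec{v}_{-i})-[v_i(\beta)+\sum_{j\neq i}v_j(\beta)]$; taking the infimum over $v_i\in\Theta_i^{\alpha}(\vec{v}_{-i})$ gives the second line.

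The reverse direction runs the same computation backwards. Given $(h_i^{\alpha})$ satisfying the constraints, IR for a type $v_i\in\Theta_i^{\alpha}(\vec{v}_{-i})$ follows because the first constraint forces $h_i^{\alpha}(\vec{v}_{-i})\le \inf_{\widetilde{v}_i\in\Theta_i^{\alpha}(\vec{v}_{-i})}w(\widetilde{v}_i,\vec{v}_{-i})\le w(v_i,\vec{v}_{-i})$. For IC I would observe that two reports inducing the same allocation receive identical payments, so within-allocation incentive constraints hold with equality and only cross-allocation deviations need checking; for a deviation from $\alpha$ to $\beta$, the second constraint, specialized from the infimum down to the actual true type $v_i\in\Theta_i^{\alpha}(\vec{v}_{-i})$, is exactly the required IC inequality after rearrangement.

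The main obstacle I anticipate is the bookkeeping around allocations $\alpha$ with $\Theta_i^{\alpha}(\vec{v}_{-i})=\emptyset$: such $h_i^{\alpha}$ never enter the payment rule, and the corresponding first-line constraint is vacuous (an infimum over the empty set being $+\infty$), yet they appear on the subtracted side of certain cross-allocation constraints. In the forward direction I would therefore extend the data produced by Lemma~\ref{lem:allocation} to these phantom allocations by setting each such $h_i^{\beta}(\vec{v}_{-i})$ large enough to satisfy the finitely many constraints $h_i^{\alpha}-h_i^{\beta}\le(\cdots)$ in which it appears---possible since $\Gamma$ is finite---without altering the induced mechanism. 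The only genuinely delicate point is the passage between the pointwise inequalities (one per type) and the infimum form of the constraints; this is valid in both directions precisely because IR and IC must hold simultaneously for all types in a component, and an infimum bound is equivalent to the whole family of pointwise bounds.
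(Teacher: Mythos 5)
Your proposal is correct and follows essentially the same route as the paper's proof: reduce to allocation-wise Groves mechanisms via Lemma~\ref{lem:allocation}, then convert the pointwise IR and cross-allocation IC inequalities into the two infimum constraints (noting that within-allocation IC is automatic) and run the same equivalences backwards for the converse. Your extra care about allocations with $\Theta_i^{\alpha}(\vec{v}_{-i})=\emptyset$ is a detail the paper's proof does not address, and your fix (assign those phantom $h_i^{\alpha}$ sufficiently large values, which is possible since $\Gamma$ is finite and leaves the mechanism unchanged) is sound.
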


\begin{proof}
    An allocation-wise Groves mechanism $(h_i^{\alpha})_{\alpha\in\Gamma}$ is IR if and only if 
    \begin{align*}
    v_i(\alpha^{\eff}(v_i,\vec{v}_{-i})) & - \left[h_i^{\alpha^{\eff}(v_i,\vec{v}_{-i})}(\vec{v}_{-i}) - \sum_{j\neq i}v_j(\alpha^{\eff}(v_i,\vec{v}_{-i}))\right]\ge 0\quad\forall\;(v_i,\vec{v}_{-i})\in\bTheta \\
    &\iff  h_i^{\alpha^{\eff}(v_i,\vec{v}_{-i})}(\vec{v}_{-i}) \le w(v_i,\vec{v}_{-i})\quad\forall\;(v_i,\vec{v}_{-i})\in\bTheta \\
    &\iff h_i^{\alpha}(\vec{v}_{-i}) \le w(v_i,\vec{v}_{-i})\quad\forall\; \alpha\in\Gamma, \vec{v}_{-i}\in\bTheta_{-i}, v_i\in\Theta_i^{\alpha}(\vec{v}_{-i}) \\
    & \iff h_i^{\alpha}(\vec{v}_{-i}) \le \inf_{\widetilde{v}_i\in\Theta_i^{\alpha}(\vec{v}_{-i})} w(\widetilde{v}_i,\vec{v}_{-i})\quad\forall\; \alpha\in\Gamma, \vec{v}_{-i}\in\bTheta_{-i}.
    \end{align*}
    It is IC if and only if for all $\vec{v}_{-i}\in\bTheta_{-i}$ and all $v_i, v_i'\in\Theta_i(\vec{v}_{-i})$, an agent of true type $v_i$ has no incentive to misreport $v_i'$ to the mechanism. Any allocation-wise Groves mechanism already satisfies these constraints for $v_i, v_i'\in\Theta_i^{\alpha}(\vec{v}_{-i})$, for every $\alpha\in\Gamma$, since when restricted to any $\Theta_i^{\alpha}$ it is equivalent to the vanilla Groves mechanism given by $h_i^{\alpha}$. So, for each $\vec{v}_{-i}\in\bTheta_{-i}$, it suffices to enforce IC constraints over all $v_i\in\Theta_i^{\alpha}(\vec{v}_{-i})$ and all $v_i'\in\Theta_i^{\beta}(\vec{v}_{-i})$, over every pair of differing allocations $\alpha,\beta\in\Gamma$. An allocation-wise Groves mechanism $(h_i^{\alpha})_{\alpha\in\Gamma}$ is therefore IC if and only if (for all $\vec{v}_{-i}\in\bTheta_{-i}$)
    \begin{align*}
        v_i &(\alpha) - \left[h_i^{\alpha}(\vec{v}_{-i}) - \sum_{j\neq i}v_j(\alpha)\right]\ge v_i(\beta) - \left[h_i^{\beta}(\vec{v}_{-i}) - \sum_{j\neq i}v_j(\beta)\right]\quad\forall\;\alpha,\beta\in\Gamma, v_i\in\Theta_i^{\alpha}(\vec{v}_{-i}) \\
        &\iff h_i^{\alpha}(\vec{v}_{-i}) - h_i^{\beta}(\vec{v}_{-i})\le w(v_i, \vec{v}_{-i}) - \left[v_i(\beta) + \sum_{j\neq i}v_j(\beta)\right]\quad\forall\;\alpha,\beta\in\Gamma, v_i\in\Theta_i^{\alpha}(\vec{v}_{-i}) \\ 
        &\iff h_i^{\alpha}(\vec{v}_{-i}) - h_i^{\beta}(\vec{v}_{-i})\le \inf_{\widetilde{v}_i\in\Theta_i^{\alpha}(\vec{v}_{-i})}w(\widetilde{v}_i, \vec{v}_{-i}) - \left[\widetilde{v}_i(\beta) + \sum_{j\neq i}v_j(\beta)\right]\quad\forall\;\alpha,\beta\in\Gamma.
    \end{align*}
    The theorem statement now follows from Lemma~\ref{lem:allocation}.
\end{proof}

Theorem~\ref{thm:allocation-IC-feasibility} seems to leave open the possibility that there is actually a Parento frontier of undominated revenue-maximial allocation-wise Groves mechanisms. However, that is not the case. It turns out that the revenue-optimal allocation-wise Groves mechanism is unique, which we prove next.

\begin{theorem}\label{thm:allocational-characterization}
The unique revenue-optimal mechanism subject to efficiency, IC, and IR is the allocation-wise Groves mechanism given by $(h_i^{\alpha})_{\alpha\in\Gamma}$ that maximizes $\sum_{\alpha\in\Gamma}h_i^{\alpha}$ subject to constraints~\eqref{eq:allocation-constraints}, for each $i$.
\end{theorem}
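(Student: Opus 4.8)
The plan is to exploit the network-flow (system-of-difference-constraints) structure hidden in~\eqref{eq:allocation-constraints} to exhibit a single feasible choice of $(h_i^\alpha)$ that dominates every other feasible choice coordinatewise; this point is then simultaneously the pointwise revenue maximizer and the unique maximizer of $\sum_\alpha h_i^\alpha$. I would begin with two reductions. By Theorem~\ref{thm:allocation-IC-feasibility}, the efficient IC and IR mechanisms are exactly the allocation-wise Groves mechanisms whose $(h_i^\alpha)$ satisfy~\eqref{eq:allocation-constraints}, and on any profile $\vec v$ with $\alpha := \alpha^{\eff}(\vec v)$ the revenue is
\[
\sum_{i=1}^n p_i(\vec v) \;=\; \sum_{i=1}^n h_i^{\alpha}(\vec v_{-i}) \;-\; \sum_{i=1}^n\sum_{j\neq i} v_j(\alpha),
\]
whose only dependence on the mechanism is through the $h_i^\alpha$, and which is strictly increasing in each such term. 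Moreover the constraints decouple completely across agents $i$ and across revealed profiles $\vec v_{-i}$: for fixed $(i,\vec v_{-i})$ they constrain only the finite vector $x=(h_i^\alpha(\vec v_{-i}))_{\alpha\in\Gamma}$, via upper bounds $x_\alpha\le a_\alpha$ and difference constraints $x_\alpha-x_\beta\le b_{\alpha\beta}$, where $a_\alpha$ and $b_{\alpha\beta}$ denote the right-hand sides in~\eqref{eq:allocation-constraints}. This is precisely a system of difference constraints, the promised network-flow structure. I would restrict attention to allocations $\alpha$ with $\Theta_i^\alpha(\vec v_{-i})\neq\emptyset$, since the other $h_i^\alpha$ influence no payment; for these, $a_\alpha$ is finite.

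The main step is to prove that each such difference-constraint region has a coordinatewise-largest feasible point. The key structural fact is closure under coordinatewise maximum: if $x$ and $y$ are feasible then so is $z=\max(x,y)$, since for any index $\alpha$, taking (without loss of generality) $z_\alpha=x_\alpha$ gives $z_\alpha\le a_\alpha$ and $z_\alpha-z_\beta\le x_\alpha-x_\beta\le b_{\alpha\beta}$. The region is nonempty (\VCG\ is feasible by Theorem~\ref{thm:allocation-IC-feasibility}) and bounded above by the finite $a_\alpha$, so for each $\alpha$ the linear functional $x\mapsto x_\alpha$ attains its maximum at some feasible $x^{(\alpha)}$; taking the coordinatewise maximum of the finitely many $x^{(\alpha)}$ and invoking max-closure yields a feasible $x^*$ with $x^*_\alpha=\max\{x_\alpha: x\text{ feasible}\}$ for every $\alpha$. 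Hence $x^*$ dominates every feasible point coordinatewise and is the unique such maximal point. (Concretely, $x^*$ is given by shortest-path distances in the constraint graph, which is where the flow structure becomes explicit.)

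Finally I would assemble the pieces. Let $(h_i^\alpha)^*$ be the mechanism obtained by setting, for every $i$ and $\vec v_{-i}$, the vector $(h_i^\alpha(\vec v_{-i}))_\alpha$ equal to the coordinatewise-maximal $x^*$ of its subproblem. Since $(h_i^\alpha)^*$ dominates every feasible $(h_i^\alpha)$ coordinatewise and revenue is increasing in each coordinate, it weakly beats every efficient IC and IR mechanism on every profile; and if a feasible $h$ differs from $(h_i^\alpha)^*$ at some used coordinate $(i,\alpha,\vec v_{-i})$, then choosing any $v_i\in\Theta_i^\alpha(\vec v_{-i})$ produces a profile on which $h$ earns strictly less, so $(h_i^\alpha)^*$ is the unique revenue-optimal such mechanism. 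To identify it with the claimed maximizer, note that for each $(i,\vec v_{-i})$ the objective $\sum_\alpha h_i^\alpha(\vec v_{-i})$ is a strictly increasing linear functional, so any feasible point other than the coordinatewise max has strictly smaller sum; thus $x^*$ is its unique maximizer and the sum-maximizing mechanism coincides with $(h_i^\alpha)^*$.

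I expect the middle step to be the crux: establishing that a single coordinatewise-largest feasible point exists, i.e.\ that maximizing the scalar objective $\sum_\alpha h_i^\alpha$ forces no tradeoff among the coordinates $h_i^\alpha$. This is exactly what the difference-constraint (network-flow) structure buys through the lattice/max-closure property; without it one could only assert a Pareto frontier of undominated mechanisms, which is the very possibility this theorem rules out.
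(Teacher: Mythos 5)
Your proposal is correct, and it rests on the same underlying structure as the paper's proof: the constraints \eqref{eq:allocation-constraints} form a system of difference constraints whose simultaneous coordinatewise maximizer is the vector of shortest-path distances from the source in the associated constraint graph. The difference is in how the crux is discharged. The paper builds the graph $G$ explicitly and cites the standard network-flow fact that the LP $\max\sum_\alpha h_i^\alpha$ over such constraints is solved by shortest-path distances, with each $h_i^\alpha$ individually equal to $\max\{h_i^\alpha : \eqref{eq:allocation-constraints}\}$; you instead prove that simultaneous-maximization property from first principles via max-closure of the feasible region (if $x,y$ are feasible so is their coordinatewise maximum), which yields a coordinatewise-largest feasible point without invoking any shortest-path machinery. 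Your route is more elementary and self-contained, and it also handles a detail the paper glosses over: when $\Theta_i^\alpha(\vec v_{-i})=\emptyset$ the bound $a_\alpha$ is $+\infty$ and the literal LP $\max\sum_{\alpha\in\Gamma}h_i^\alpha$ is unbounded, so one must (as you do) restrict to allocations realized by some type, which affect no payment anyway. What the paper's explicit graph formulation buys in exchange is a concrete combinatorial description of the optimal $h_i^\alpha$ as shortest-path costs, which it then reuses for the component-wise characterization and the worked example; your argument establishes existence and uniqueness of the dominant point but leaves its shortest-path interpretation as a parenthetical.
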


We will prove Theorem~\ref{thm:allocational-characterization} by interpreting the linear program \begin{equation}\label{eq:lp-allocation}\tag{LP-$\Gamma$}\max\left\{\sum_{\alpha\in\Gamma}h_i^{\alpha} :~\eqref{eq:allocation-constraints}\right\}\end{equation} from the perspective of network flow theory.

First, observe that the~\ref{eq:lp-allocation} is always feasible. Indeed, vanilla WT is always a feasible solution: let $h_i^{\alpha} = \inf_{\widetilde{v}_i\in\Theta_i(\vec{v}_{-i})}w(\widetilde{v}_i,\vec{v}_{-i})$ for all $\alpha\in\Gamma$. Constraints of the first form are clearly satisfied. Constraints of the second form are also clearly satisfied since the left-hand side is zero, and the right-hand side is always non-negative as $\alpha$ maximizes welfare for all $\widetilde{v}_i\in\Theta_i^{\alpha}(\vec{v}_{-i})$ (by definition of $\Theta_i^{\alpha}$). Vanilla VCG with $h_i^{\alpha} = w(0, \vec{v}_{-i})$ for all $\alpha\in\Gamma$ is also a feasible solution.

\begin{proof}[Proof of Theorem~\ref{thm:allocational-characterization}]

Consider the directed graph $G = (V,E)$ with vertices $V = \{s\}\cup\Gamma$ ($s$ is the source node), edges $E = (\{s\}\cup \Gamma)\times \Gamma$, and edge costs $$\begin{aligned} & \mathsf{cost}(s, \alpha) = \inf_{\widetilde{v}_i\in\Theta_i^{\alpha}(\vec{v}_{-i})}w(\widetilde{v}_i,\vec{v}_{-i})\quad\forall\;\alpha\in\Gamma \\ & \mathsf{cost}(\beta,\alpha) = \inf_{\widetilde{v}_i\in\Theta_i^{\alpha}(\vec{v}_{-i})}w(\widetilde{v}_i, \vec{v}_{-i}) - \left[\widetilde{v}_i(\beta) + \sum_{j\neq i}v_j(\beta)\right]\quad\forall\;\alpha,\beta\in\Gamma.\end{aligned}$$ In words, $G$ is the complete directed graph on vertex set $\Gamma$ with an additional source vertex $s$ and directed edges from $s$ to each vertex of $\Gamma$. Edges of the form $(s,\alpha)$ have cost equal to the welfare of the weakest type in $\Theta_i^{\alpha}(\vec{v}_{-i})$. Edges $(\beta,\alpha)$ have cost equal to the minimum welfare difference between allocations $\alpha$ and $\beta$ over all types in $\Theta_i^{\alpha}(\vec{v}_{-i})$.

 Linear program~\ref{eq:lp-allocation} precisely solves the \emph{single-source shortest paths} problem on $G$ with source node $s$.\footnote{More accurately, it is the dual of the minimum-cost flow LP.} That is, the optimal solution $(h_i^{\alpha})_{\alpha\in\Gamma}$ to~\ref{eq:lp-allocation} has the property that, for every $\alpha\in\Gamma$, $h_i^{\alpha}$ is the cost of the shortest (minimum-cost) $s\to\alpha$ path in $G$ (this is a standard fact from network flow theory; see, for example,~\citet{erickson2017linear}). A minimum cost $s\to\alpha$ path in $G$ can be equivalently computed as $\max\left\{h_i^{\alpha} :~\eqref{eq:allocation-constraints}\right\}$, showing that~\ref{eq:lp-allocation} yields the unique optimal efficient mechanism.
\end{proof}

The network flow interpretation here is similar in spirit to those used to understand IC mechanisms in~\citet{vohra2011mechanism}. Vohra's focus is on characterizing IC mechanisms in terms of the {\em existence} of bounded shortest paths (as witnessed by the no-negative-cycle condition) on graphs that are similar to ours. In contrast, our shortest path LPs are always feasible, and we use them to give a new insight into generalized Groves mechanisms. To our knowledge, this is the first time the network interpretation of incentive compatibility has been used to describe revenue-optimal efficient mechanisms.

\subsection{Connected-Component Characterization of the Optimal Efficient Mechanism}\label{sec:component-wise}

In this section we give an equivalent characterization of the revenue-optimal efficient, IC, and IR mechanism based on decompositions of the type space into its connected components. The characterization does not make explicit reference to the underlying space of allocations $\Gamma$.

Given $\vec{v}_{-i}$, decompose $\Theta_i(\vec{v}_{-i})$ into a disjoint union of its connected components $\cC(\vec{v}_{-i})$, that is, $\Theta_i(\vec{v}_{-i}) = \bigcup_{C\in\cC(\vec{v}_{-i})}C$. We assume in this section that $\cC(\vec{v}_{-i})$ is finite. Let $C(v_i,\vec{v}_{-i})\in\cC(\vec{v}_{-i})$ denote the connected component $v_i$ lies in.

\paragraph{Component-wise Groves Mechanisms} A {\em component-wise Groves mechanism} is defined by functions $h_i^C:\bTheta_{-i}\to\R$ for every connected component $C\in\cC(\vec{v}_{-i})$ and for every agent $i$. It charges Agent $i$ $$p_i(\vec{v}) = h_i^{C(v_i,\vec{v}_{-i})}(\vec{v}_{-i}) - \sum_{j\neq i}v_j(\alpha^{\eff}(\vec{v})).$$

\begin{lemma}\label{lem:component}
    If $\vec{p}$ is IC, it is a component-wise Groves mechanism.
\end{lemma}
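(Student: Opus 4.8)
The plan is to mirror the structure of the proof of Lemma~\ref{lem:allocation}, but now decompose the induced type space directly into its connected components rather than first grouping by efficient allocation. The key observation is that on any single connected component $C\in\cC(\vec{v}_{-i})$, the restricted pricing rule is IC (restrictions of IC rules to subsets remain IC), and $C$ is connected by construction, so Theorem~\ref{thm:groves-uniqueness} applies directly to the restriction of $\vec{p}$ to $C$.

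Concretely, fix an agent $i$ and revealed types $\vec{v}_{-i}$. First I would note that for each connected component $C\in\cC(\vec{v}_{-i})$, the efficient allocation $\alpha^{\eff}(v_i,\vec{v}_{-i})$ need not be constant as $v_i$ ranges over $C$ — this is the crucial difference from Lemma~\ref{lem:allocation}, where each piece $\Theta_i^{\alpha}$ was by definition a single-allocation set. So Theorem~\ref{thm:groves-uniqueness} gives, on $C$, functions $h_i$ such that $p_i(v_i,\vec{v}_{-i}) = h_i(\vec{v}_{-i}) - \sum_{j\neq i}v_j(\alpha^{\eff}(v_i,\vec{v}_{-i}))$ for all $v_i\in C$, where the allocation in the externality term is the efficient one induced by $v_i$. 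The point is that the additive offset $h_i$ — the term independent of $v_i$ — is forced to be a single constant across all of $C$, precisely because $C$ is connected and Theorem~\ref{thm:groves-uniqueness} pins down the Groves form uniquely up to such a constant. I would name this constant $h_i^C(\vec{v}_{-i})$, and then observe that setting the component-wise functions to these values exhibits $\vec{p}$ as the component-wise Groves mechanism $p_i(\vec{v}) = h_i^{C(v_i,\vec{v}_{-i})}(\vec{v}_{-i}) - \sum_{j\neq i}v_j(\alpha^{\eff}(\vec{v}))$, which is exactly the claimed form.

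The main subtlety — and the step I would be most careful about — is invoking Theorem~\ref{thm:groves-uniqueness} correctly on the component $C$. That theorem is stated under the hypothesis that $\Theta_i(\vec{v}_{-i})$ is connected for every $\vec{v}_{-i}$; here I am applying it to the mechanism design problem whose induced type space for agent $i$ is the single component $C$ (which is connected), with the other agents' type space restricted compatibly. I need the restriction of an IC rule to a subdomain to again be IC, which is immediate since IC is a universally quantified inequality over pairs of types and restricting the domain only removes constraints. One should also verify that the efficient allocation used inside Theorem~\ref{thm:groves-uniqueness} agrees with the global $\alpha^{\eff}$ on $C$; this holds because $\alpha^{\eff}$ depends only on the full type profile and is unaffected by which subdomain we regard $v_i$ as living in.

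I do not anticipate a hard obstacle here — the lemma is essentially a re-packaging of Theorem~\ref{thm:groves-uniqueness} under a coarser (connected-component) partition than the allocation partition used in Lemma~\ref{lem:allocation}. The finiteness assumption on $\cC(\vec{v}_{-i})$ stated in this section is not needed for this direction of the argument (it will matter for the subsequent optimization/network-flow characterization), so I would not rely on it in the proof of this lemma.
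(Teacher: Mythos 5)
Your proposal is correct and follows essentially the same route as the paper: restrict $\vec{p}$ to each connected component $C\in\cC(\vec{v}_{-i})$, invoke Theorem~\ref{thm:groves-uniqueness} on that connected subdomain to extract a single constant $h_i^C(\vec{v}_{-i})$, and assemble these into the component-wise Groves form. The paper's proof is a three-line version of exactly this argument; your additional care about the allocation varying within a component and about restrictions preserving IC is sound but not something the paper spells out.
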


\begin{proof}
    Let $\vec{p}$ be IC. When restricted to a connected component $C\in\cC(\vec{v}_{-i})$, $\vec{p}$ is a Groves mechanism due to Theorem~\ref{thm:groves-uniqueness}. That is, there exists $h_i^C$ such that for all $v_i\in C$, $p_i(v_i,\vec{v}_{-i}) = h_i^C(\vec{v}_{-i}) - \sum_{j\neq i}v_j(\alpha^{\eff}(\vec{v}))$. So $\vec{p}$ is a component-wise Groves mechanism given by $(h_i^C)_{C\in\cC(\vec{v}_{-i})}$.
\end{proof}

\begin{theorem}\label{thm:cc-IC-feasibility}
    A pricing rule $\vec{p}$ is IC and IR if and only if it is a component-wise Groves mechanism given, for each $i$ and each $\vec{v}_{-i}$, by $(h_i^C)_{C\in\cC(\vec{v}_{-i})}$ that satisfies 
    \begin{equation*}\label{eq:cc-constraints}\tag{Constr.-$\cC$}\begin{aligned}
    &h_i^{C}\le \inf_{\widetilde{v}_i^C\in C}w(\widetilde{v}_i^C,\vec{v}_{-i})\;\forall\;C\in\cC \\
    &h_i^{C}-h_i^{D}\le\inf_{\substack{\widetilde{v}_i^C\in C \\\widetilde{v}_i^D\in D}} w(\widetilde{v}_i^C, \vec{v}_{-i}) - \left[\widetilde{v}_i^C(\alpha^{\eff}(\widetilde{v}_i^D,\vec{v}_{-i}))+\sum_{j\neq i}v_j(\alpha^{\eff}(\widetilde{v}_i^D,\vec{v}_{-i}))\right]\;\forall\;C, D\in \cC.
\end{aligned}\end{equation*}
\end{theorem}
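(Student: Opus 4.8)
The plan is to mirror the proof of Theorem~\ref{thm:allocation-IC-feasibility}, replacing the partition by efficient allocations with the partition into connected components. By Lemma~\ref{lem:component}, every IC pricing rule is already a component-wise Groves mechanism, so it suffices to determine exactly which component-wise Groves mechanisms $(h_i^C)_{C\in\cC(\vec{v}_{-i})}$ are IC and IR.

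First I would establish the IR characterization. Substituting the component-wise Groves payment $p_i(\vec{v}) = h_i^{C(v_i,\vec{v}_{-i})}(\vec{v}_{-i}) - \sum_{j\neq i}v_j(\alpha^{\eff}(\vec{v}))$ into the IR inequality $v_i(\alpha^{\eff}(\vec{v})) - p_i(\vec{v})\ge 0$, the valuation terms collapse into the efficient welfare $w(v_i,\vec{v}_{-i})$, yielding $h_i^{C(v_i,\vec{v}_{-i})}\le w(v_i,\vec{v}_{-i})$. Since this must hold for every $v_i$ in a given component $C$, taking the infimum over $v_i\in C$ gives exactly the first constraint of~\eqref{eq:cc-constraints}.

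Next I would handle IC, splitting the misreport constraints into intra-component and inter-component pairs. The key observation---and the place where the component decomposition does real work---is that the restriction of the mechanism to a single component $C$ is a vanilla Groves mechanism (the term $h_i^C$ is a single constant independent of $v_i$), and vanilla Groves mechanisms are always IC. Hence all misreports between two types in the same component are automatically deterred, and I need only enforce IC when the true type $v_i$ lies in some component $C$ and the misreport $v_i'$ lies in a different component $D$. Expanding the IC inequality $v_i(\alpha^{\eff}(v_i,\vec{v}_{-i})) - p_i(v_i,\vec{v}_{-i})\ge v_i(\alpha^{\eff}(v_i',\vec{v}_{-i})) - p_i(v_i',\vec{v}_{-i})$ with the Groves payments and again collapsing the left-hand valuation terms into $w(v_i,\vec{v}_{-i})$, the inequality rearranges to $h_i^C - h_i^D\le w(v_i,\vec{v}_{-i}) - v_i(\alpha^{\eff}(v_i',\vec{v}_{-i})) - \sum_{j\neq i}v_j(\alpha^{\eff}(v_i',\vec{v}_{-i}))$. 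Requiring this for all $v_i\in C$ and $v_i'\in D$ amounts to taking the infimum of the right-hand side over both, which is precisely the second constraint of~\eqref{eq:cc-constraints}.

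The main obstacle, relative to the cleaner allocational proof, is that the efficient allocation is not constant within a connected component, so the intra-component IC cannot be dismissed by the trivial ``constant allocation and constant payment'' argument available in Theorem~\ref{thm:allocation-IC-feasibility}. Instead it must be handled by appealing to the fact that a vanilla Groves mechanism is IC even when its induced allocation varies. A related piece of careful bookkeeping is that the misreported allocation $\alpha^{\eff}(\widetilde{v}_i^D,\vec{v}_{-i})$ appearing in the second constraint depends on the misreport $\widetilde{v}_i^D\in D$, so the joint infimum over $(\widetilde{v}_i^C,\widetilde{v}_i^D)\in C\times D$ must be tracked simultaneously rather than factored across the two components. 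Once the IR and inter-component IC conditions are shown to be equivalent to~\eqref{eq:cc-constraints}, combining them with Lemma~\ref{lem:component} yields the stated equivalence.
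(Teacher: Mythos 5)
Your proposal is correct and follows essentially the same route as the paper's proof: reduce to component-wise Groves mechanisms via Lemma~\ref{lem:component}, derive the first constraint from IR by taking an infimum over each component, and derive the second from the inter-component IC constraints after observing that intra-component IC is automatic because the restriction to a component is a vanilla Groves mechanism. Your remarks on the non-constant allocation within a component and the joint infimum over $C\times D$ match the bookkeeping the paper performs implicitly.
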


\begin{proof}
 Component-wise Groves mechanism $(h_i^{C})_{C\in\cC(\vec{v}_{-i})}$ is IR if and only if 
 \begin{align*}
    v_i(\alpha^{\eff}(v_i,\vec{v}_{-i})) & - \left[h_i^{C(v_i,\vec{v}_{-i})}(\vec{v}_{-i}) - \sum_{j\neq i}v_j(\alpha^{\eff}(v_i,\vec{v}_{-i}))\right]\ge 0\quad\forall\;(v_i,\vec{v}_{-i})\in\bTheta \\
    &\iff  h_i^{C(v_i,\vec{v}_{-i})}(\vec{v}_{-i}) \le w(v_i,\vec{v}_{-i})\quad\forall\;(v_i,\vec{v}_{-i})\in\bTheta \\
    &\iff h_i^{C}(\vec{v}_{-i}) \le w(v_i,\vec{v}_{-i})\quad\forall\; \vec{v}_{-i}\in\bTheta_{-i}, C\in\cC(\vec{v}_{-i}), v_i\in C  \\
    & \iff h_i^{C}(\vec{v}_{-i}) \le \inf_{\widetilde{v}_i\in C} w(\widetilde{v}_i,\vec{v}_{-i})\quad\forall\; \vec{v}_{-i}\in\bTheta_{-i},C\in\cC(\vec{v}_{-i}).
    \end{align*}
    It is IC if and only if for all $\vec{v}_{-i}\in\bTheta_{-i}$ and all $v_i, v_i'\in\Theta_i(\vec{v}_{-i})$, an agent of true type $v_i$ has no incentive to misreport $v_i'$ to the mechanism. Any component-wise Groves mechanism already satisfies these constraints for $v_i, v_i'\in C$, for every $\vec{v}_{-i}\in\bTheta_{-i}$ and every $C\in\cC(\vec{v}_{-i})$, since when restricted to any connected component $C\in\cC(\vec{v}_{-i})$ it is equivalent to the vanilla Groves mechanism given by $h_i^{C}$. So, for each $\vec{v}_{-i}\in\bTheta_{-i}$, it suffices to enforce IC constraints over all $v_i\in C$ and all $v_i'\in D$, over every pair of differing connected components $C,D\in\cC(\vec{v}_{-i})$. A component-wise Groves mechanism $(h_i^{C})_{C\in\cC(\vec{v}_{-i})}$ is therefore IC if and only if (for all $\vec{v}_{-i}\in\bTheta_{-i}$; let $\cC = \cC(\vec{v}_{-i})$, $h_i^C = h_i^C(\vec{v}_{-i})$, $h_i^D=h_i^D(\vec{v}_{-i})$ for brevity)
    \begin{align*}
        v_i (&\alpha^{\eff}(v_i,\vec{v}_{-i})) - \left[h_i^{C} - \sum_{j\neq i}v_j(\alpha^{\eff}(v_i,\vec{v}_{-i}))\right]\\ &\quad\ge v_i(\alpha^{\eff}(v_i',\vec{v}_{-i})) - \left[h_i^{D} - \sum_{j\neq i}v_j(\alpha^{\eff}(v_i',\vec{v}_{-i}))\right]\quad\forall\;C, D\in\cC, v_i\in C, v_i'\in D  \\
        &\iff h_i^{C} - h_i^{D}\le w(v_i, \vec{v}_{-i}) - \left[v_i(\alpha^{\eff}(v_i',\vec{v}_{-i})) + \sum_{j\neq i}v_j(\alpha^{\eff}(v_i',\vec{v}_{-i}))\right]\\ &\hspace{26.5em}\forall\;C, D\in\cC, v_i\in C, v_i'\in D  \\ 
        &\iff h_i^{C} - h_i^{D}\le \inf_{\substack{\widetilde{v}_i^C\in C \\\widetilde{v}_i^D\in D}} w(\widetilde{v}_i^C, \vec{v}_{-i}) - \left[\widetilde{v}_i^C(\alpha^{\eff}(\widetilde{v}_i^D,\vec{v}_{-i}))+\sum_{j\neq i}v_j(\alpha^{\eff}(\widetilde{v}_i^D,\vec{v}_{-i}))\right]\\&\hspace{26.5em}\forall\;C,D\in\cC.
    \end{align*}
    The theorem statement now follows from Lemma~\ref{lem:component}.
\end{proof}

\begin{theorem}
    The unique revenue-optimal mechanism subject to efficiency, IC, and IR is the component-wise Groves mechanism given by, for each agent $i$, $(h_i^C)_{C\in\cC_{\vec{v}_{-i}}}$ that maximizes $\sum_{C\in\cC(\vec{v}_{-i})} h_i^C$ subject to the constraints~\eqref{eq:cc-constraints}.
\end{theorem}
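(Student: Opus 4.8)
The plan is to mirror the network-flow argument used for Theorem~\ref{thm:allocational-characterization}, replacing the role of allocations $\alpha\in\Gamma$ with connected components $C\in\cC(\vec{v}_{-i})$. By Theorem~\ref{thm:cc-IC-feasibility}, every efficient, IC, and IR pricing rule is a component-wise Groves mechanism whose potentials $(h_i^C)_{C\in\cC(\vec{v}_{-i})}$ satisfy~\eqref{eq:cc-constraints}, and conversely every choice of potentials satisfying~\eqref{eq:cc-constraints} yields such a mechanism. Since the payment $p_i(\vec{v}) = h_i^{C(v_i,\vec{v}_{-i})}(\vec{v}_{-i}) - \sum_{j\neq i}v_j(\alpha^{\eff}(\vec{v}))$ differs across feasible mechanisms only through the term $h_i^{C(v_i,\vec{v}_{-i})}$, the task of maximizing revenue on every type profile simultaneously reduces to, for each agent $i$ and each $\vec{v}_{-i}$, making every coordinate $h_i^C$ as large as possible subject to~\eqref{eq:cc-constraints}.

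First I would fix $i$ and $\vec{v}_{-i}$ and build the directed graph $G$ on vertex set $\{s\}\cup\cC(\vec{v}_{-i})$ (finite by the standing assumption of this section) with a source $s$, assigning edge $(s,C)$ the cost $\inf_{\widetilde{v}_i^C\in C}w(\widetilde{v}_i^C,\vec{v}_{-i})$ and edge $(D,C)$ the cost equal to the right-hand side of the second family of constraints in~\eqref{eq:cc-constraints}. Then~\eqref{eq:cc-constraints} is precisely the statement that $(h_i^C)$ is a feasible shortest-path potential from $s$: $h_i^C\le\mathsf{cost}(s,C)$ and $h_i^C - h_i^D\le\mathsf{cost}(D,C)$. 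I would note that this system is feasible---vanilla \WT, i.e.\ $h_i^C\equiv\inf_{\widetilde{v}_i\in\Theta_i(\vec{v}_{-i})}w(\widetilde{v}_i,\vec{v}_{-i})$, satisfies both constraint families just as in the allocational case, since the second family then has left-hand side zero and nonnegative right-hand side---and feasibility precludes a negative-cost cycle in $G$, so the shortest-path distances $d(s,C)$ are well-defined.

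The crux is the standard network-flow fact that $d(s,\cdot)$ is the \emph{componentwise} maximum feasible potential: summing the constraints along any $s\to C$ walk gives $h_i^C\le d(s,C)$ for every feasible potential, while the assignment $h_i^C=d(s,C)$ is itself feasible by the triangle inequality for shortest paths. Because $\cC(\vec{v}_{-i})$ is finite and a componentwise maximum exists, any objective strictly increasing in all coordinates---in particular $\sum_{C}h_i^C$---is maximized uniquely at $h_i^C=d(s,C)$. This single solution simultaneously maximizes every $h_i^C$, so by Theorem~\ref{thm:cc-IC-feasibility} it pointwise dominates the payment of every other efficient, IC, and IR mechanism, which establishes both optimality and uniqueness of the claimed mechanism.

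I expect the main obstacle to be the same conceptual point as in the allocational proof: verifying that the per-coordinate upper bounds from the shortest-path distances are \emph{simultaneously} attainable, so that maximizing the aggregate $\sum_C h_i^C$ does not trade one component's price off against another's but genuinely recovers the pointwise-dominant mechanism. Care is also needed to confirm feasibility (hence the absence of a negative cycle) in the component-wise graph and to invoke the finiteness of $\cC(\vec{v}_{-i})$ assumed here; once these are in place, the argument transfers directly from Theorem~\ref{thm:allocational-characterization}.
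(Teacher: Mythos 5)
Your proposal is correct and follows essentially the same route as the paper: both reduce to the shortest-path interpretation of the constraint system \eqref{eq:cc-constraints} on the graph over $\{s\}\cup\cC(\vec{v}_{-i})$, with the optimal $h_i^C$ equal to the shortest $s\to C$ distance, which is simultaneously the componentwise maximum feasible potential. You in fact spell out the key step (feasibility via vanilla \WT{} and the fact that the componentwise maximum is attained by the distances) in more detail than the paper, which simply defers to the proof of Theorem~\ref{thm:allocational-characterization}.
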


\begin{proof}
    The proof is similar to that of Theorem~\ref{thm:allocational-characterization}.
    The LP $\max\{\sum_{C\in\cC(\vec{v}_{-i})h_i^C}:\eqref{eq:cc-constraints}\}$ solves the single-source shortest paths problem on the directed graph $G = (V, E)$ with vertices $V = \{s\}\cup \cC$, edges $E = (\{s\}\cup\cC)\times\cC$, and edge costs $$\begin{aligned} & \mathsf{cost}(s, C) = \inf_{\widetilde{v}_i^C\in C}w(\widetilde{v}_i^C,\vec{v}_{-i})\;\forall\;C\in\cC \\ & \mathsf{cost}(D,C) = \inf_{\substack{\widetilde{v}_i^C\in C\\\widetilde{v}_i^D\in D}} w(\widetilde{v}_i^C, \vec{v}_{-i}) - \left[\widetilde{v}_i^C(\alpha^{\eff}(\widetilde{v}_i^D,\vec{v}_{-i}))+\sum_{j\neq i}v_j(\alpha^{\eff}(\widetilde{v}_i^D,\vec{v}_{-i}))\right]\;\forall\;C, D\in\cC\end{aligned}$$ in the sense that the optimal $h_i^C(\vec{v}_{-i})$ is the cost of the shortest $s\to C$ path in $G$. It follows that maximizing $\sum_C h_i^C$ yields the unique revenue optimal efficient mechanism.
\end{proof}

Depending on the mechanism design setting, the connected component graph can be significantly smaller than the allocation graph from the previous section. For example, $|\Gamma|$ is exponential in combinatorial auctions so the graph from Section~\ref{sec:allocational} is prohibitively large. If type spaces are represented as a union of $K$ connected components, the graph in the present section has $K+1$ vertices and $K(K+1)$ edges, regardless of how large $|\Gamma|$ might be. The key disclaimer here is that we have not pursued the algorithmic question of how to compute the edge weights of these graphs---that is an important next step for future research.

\subsection{Example}\label{sec:example}

Consider the same example from Section~\ref{sec:wt-suboptimality} with $v_2(B) = 3$ and $v_3(A) = 1$. Figure~\ref{fig:main_fig} displays the partition of Bidder 1's ambient type space $\R^2$ into three regions labeled $\emptyset$, $A$, and $B$, in which she wins nothing, item $A$, or item $B$ in the efficient allocation, respectively. Suppose now that Bidder 1's type space is $\Theta_1^A\cup\Theta_1^B$ as displayed in the first row of Figure~\ref{fig:main_fig}. Since $\Theta_1^A$ and $\Theta_1^B$ are each connected, the allocation-wise graph and the component-wise graph are identical. Vanilla WT prescribes $h_i^A = h_i^B = \min\{5,7\} = 5$, with a payment of $2$ if $v_i\in\Theta_1^A$ and a payment of $4$ if $v_i\in\Theta^B$. We have $\inf_{\widetilde{v}_i^A\in\Theta^A}w(\widetilde{v}_i^A,\vec{v}_{-i}) = 7$ and $\inf_{\widetilde{v}_i^B\in\Theta^B}w(\widetilde{v}_i^B,\vec{v}_{-i}) = 5$, attained by $\widetilde{v}_i^A = (4,1)$ and $\widetilde{v}_i^B = (1,4)$, respectively. We have $\mathsf{cost}(B,A)=3$ with $\widetilde{v}_1^A = (4,3)$ attaining the infimum and $\mathsf{cost}(A,B) = 1$ with $\widetilde{v}_1^B = (1,4)$ attaining the infimum. The revenue-optimal mechanism thus sets $h_1^A = 7$ and $h_1^B = 5$, extracting a payment of $4$ from Bidder 1 independent of which component her true type lies in. The second row of Figure~\ref{fig:main_fig} displays a similar situation with a larger $\Theta_1^A$. The edge costs of $(s,A)$, $(s,B)$, and $(A,B)$ remain the same, but now $\mathsf{cost}(B,A) = 1$ with $\widetilde{v}_1^A = (4,5)$ attaining the infimum. The revenue-optimal mechanism here sets $h_1^A = 6$ and $h_1^B = 5$, extracting a payment of $3$ from Bidder 1 if her true type is in $\Theta_1^A$ and $4$ if it is in $\Theta_1^B$. So, the ``less precise'' knowledge conveyed by the larger $\Theta_1^A$ results in lower payment extracted.

\begin{figure}[t]
    \centering
    \begin{subfigure}[b]{0.45\textwidth}
        \centering
        \begin{tikzpicture}[scale=.65,line cap=round,line join=round,>=triangle 45,x=1cm,y=1cm]
        \begin{axis}[
            x=1cm,y=1cm,
            axis lines=middle,
            ymajorgrids=false,
            xmajorgrids=false,
            xmin=0, xmax=6.5,
            ymin=0, ymax=6.5,
            xtick={0,...,6},
            ytick={0,...,6},
            xlabel={$v_1(A)$},
            ylabel={$v_1(B)$},
            every axis x label/.style={at={(ticklabel* cs:1.04)}, anchor=north,},
            every axis y label/.style={at={(ticklabel* cs:1.02)}, anchor=east,},
        ]
        \filldraw[line width=1pt,fill=black,fill opacity=0.1] (1,4) -- (3, 6) -- (0.5,6)--(0.5,4)--cycle;
        \draw[color=black] (1.3,5.3) node {\Large $\Theta_1^B$};
        \filldraw[line width=1pt,fill=black,fill opacity=0.1] (4,1) -- (6,1) -- (6,3) -- (4,3) -- cycle;
        \draw[color=black] (5,2) node {\Large $\Theta_1^A$};

        \draw [line width=1pt] (0,3)--(1,3);
        \draw [line width=1pt] (1,3)--(1,0);
        \draw [line width=1pt] (1,3)--(5,7);

        \draw[color=black] (.5,1.5) node {\Large $\emptyset$};
        \draw[color=black] (5.5,6.3) node {\Large $A$};
        \draw[color=black] (3.5,6.3) node {\Large $B$};
        \end{axis}
        \end{tikzpicture}
        \label{fig:A}
    \end{subfigure}
        \hfill
    \begin{subfigure}[b]{0.5\textwidth}
        \centering
        \[\begin{tikzcd}[sep=large, every label/.append style = {font = \normalsize}]
            && {\boxed{A}} \\
            {\boxed{s}} \\
            && {\boxed{B}}
            \arrow["1"{description}, curve={height=-10pt}, dashed, from=1-3, to=3-3]
            \arrow["7"{description}, from=2-1, to=1-3]
            \arrow["5"{description}, from=2-1, to=3-3]
            \arrow["3"{description}, curve={height=-10pt}, dashed, from=3-3, to=1-3]
        \end{tikzcd}\]
        \label{fig:B}
    \end{subfigure}
    \label{fig:TOP}
    \begin{subfigure}[b]{0.45\textwidth}
        \centering
        \begin{tikzpicture}[scale=.65,line cap=round,line join=round,>=triangle 45,x=1cm,y=1cm]
        \begin{axis}[
            x=1cm,y=1cm,
            axis lines=middle,
            ymajorgrids=false,
            xmajorgrids=false,
            xmin=0, xmax=6.5,
            ymin=0, ymax=6.5,
            xtick={0,...,6},
            ytick={0,...,6},
            xlabel={$v_1(A)$},
            ylabel={$v_1(B)$},
            every axis x label/.style={at={(ticklabel* cs:1.04)}, anchor=north,},
            every axis y label/.style={at={(ticklabel* cs:1.02)}, anchor=east,},
        ]
        \filldraw[line width=1pt,fill=black,fill opacity=0.1] (1,4) -- (3, 6) -- (0.5,6)--(0.5,4)--cycle;
        \draw[color=black] (1.3,5.3) node {\Large $\Theta_1^B$};
        \filldraw[line width=1pt,fill=black,fill opacity=0.1] (4,1) -- (6,1) -- (6,5) -- (4,5) -- cycle;
        \draw[color=black] (5,3) node {\Large $\Theta_1^A$};

        \draw [line width=1pt] (0,3)--(1,3);
        \draw [line width=1pt] (1,3)--(1,0);
        \draw [line width=1pt] (1,3)--(5,7);

        \draw[color=black] (.5,1.5) node {\Large $\emptyset$};
        \draw[color=black] (5.5,6.3) node {\Large $A$};
        \draw[color=black] (3.5,6.3) node {\Large $B$};
        \end{axis}
        \end{tikzpicture}
        \label{fig:C}
    \end{subfigure}
    \hfill
    \begin{subfigure}[b]{0.5\textwidth}
        \centering
        \[\begin{tikzcd}[sep=large, every label/.append style = {font = \normalsize}]
            && {\boxed{A}} \\
            {\boxed{s}} \\
            && {\boxed{B}}
            \arrow["1"{description}, curve={height=-10pt}, dashed, from=1-3, to=3-3]
            \arrow["7"{description}, dashed, from=2-1, to=1-3]
            \arrow["5"{description}, from=2-1, to=3-3]
            \arrow["1"{description}, curve={height=-10pt}, from=3-3, to=1-3]
        \end{tikzcd}\]
        \label{fig:D}
    \end{subfigure}
    \caption{Examples of a disconnected type space $\Theta_1 = \Theta_1^A\cup\Theta_1^B$ and the corresponding graph $G$ encoding the optimal efficient mechanism. The solid edges in $G$ make up the tree of shortest paths.}
    \label{fig:main_fig}
\end{figure}
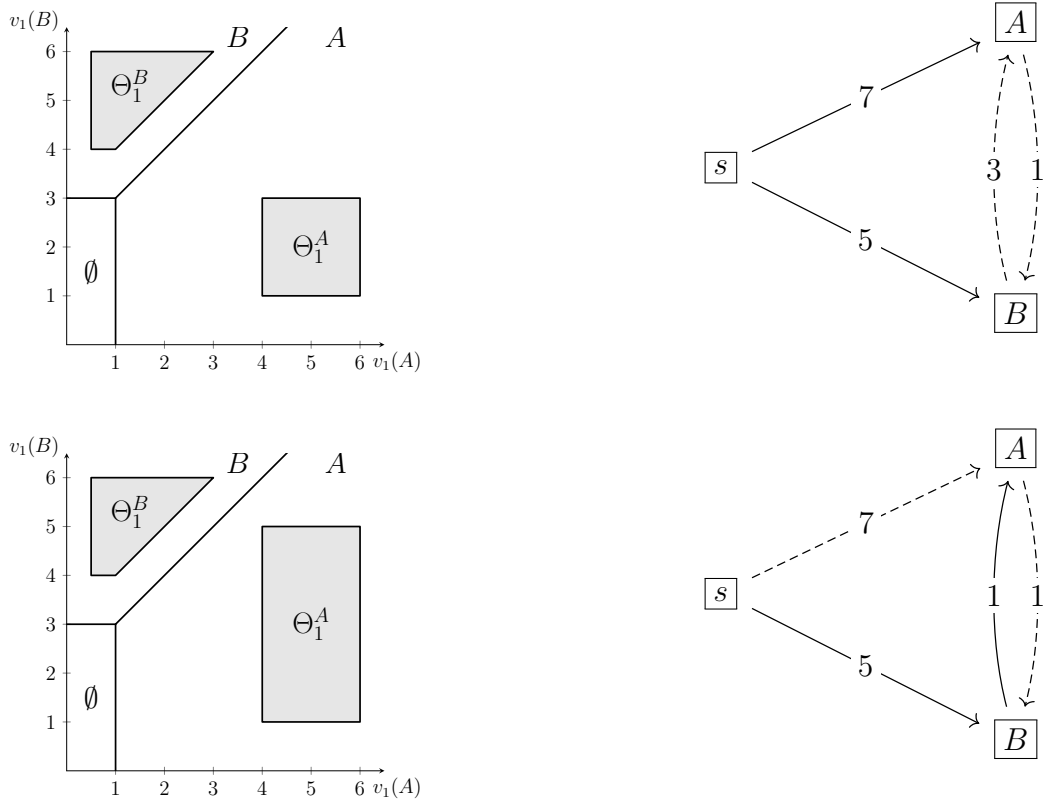

\section{Conclusions and Future Research}

We derived the revenue-optimal efficient mechanism when type spaces can be completely general. Our result significantly expands and generalizes the prior state of the art, the weakest type mechanism, that, while optimal for connected type spaces, is suboptimal for more general type spaces. Connected type spaces place a severe restriction on the kinds of knowledge structures that can be represented, and many natural informational constraints on agent types can only be described via a disconnected type space. We gave two characterizations of the optimal efficient mechanism, one via allocation-wise Groves mechanisms and one via component-wise Groves mechanisms. Both characterizations utilize the underlying network flow structure induced by incentive compatibility and individual rationality constraints.

Studying the computational aspects of our mechanisms is a pressing next research direction. Algorithms for computing the revenue-optimal efficient mechanism are an important next step to make our mechanisms practical. Such algorithms will be intrinsically tied to the {\em semantics} of agents' type spaces---how they are represented and learned will dictate how the optimal mechanism should be computed. 

An important complementary research direction is to develop techniques to {\em learn} representations of agent type spaces. Due to the generality of knowledge that type spaces can represent, we envision that modern machine learning models can be especially useful. In our setting, the learning problem is in a sense decoupled from the actual mechanism design. For example, prior learning-based approaches to mechanism design such as~\citet{dutting2019optimal,wang2024gemnet} learned the allocation function and payment function by representing them as neural networks---in contrast, in our setting one would use a machine learning model to learn as much information about the agents as possible, and then use our mechanism as a post-processor. Learning supports of (mixtures of) distributions~\citep{scott2005learning,dasgupta2005learning} and constraint learning~\citep{fajemisin2024optimization} are both relevant approaches for learning and representing type spaces from historical agent data. Extending the methodology here to other knowledge structures that incorporate distributional information and other forms of uncertainty is an interesting direction as well. 

\subsection*{Acknowledgements}
This material is based on work supported by the NSF under grants IIS-1901403, CCF-1733556, and RI-2312342, the ARO under award W911NF2210266, the Vannevar Bush Faculty Fellowship ONR N00014-23-1-2876, and NIH award A240108S001.

\newpage

\bibliography{references, dairefs}
\bibliographystyle{plainnat}

\end{document}